 \newtheorem{thm}{Theorem}[section]
 \newtheorem{cor}[thm]{Corollary}
 \newtheorem{prop}[thm]{Corollary}
 \newtheorem{conj}[thm]{Conjecture}
 \newtheorem{lem}[thm]{Lemma}
 \newtheorem{defn}[thm]{Definition}
 \theoremstyle{remark}
 \numberwithin{equation}{section}
\DeclareMathOperator{\sign}{sign}
\newcommand{\order}{\textup{order}}
\begin{document}
\title[The simplified weighted sum function and its average sensitivity]
  {The simplified weighted sum function and its average sensitivity}

\author{Jiyou Li}
\address{Department of Mathematics, Shanghai Jiao Tong University, Shanghai, P.R. China}
\email{lijiyou@sjtu.edu.cn}

\author{Chu Luo}
\address{Department of Electronics and Computer Science, University of
Southampton, Southampton, UK} \email{cl7e13@soton.ac.uk}



\thanks{This work is supported by the National Science Foundation of China
(11001170) and the National Science Foundation of Shanghai Municipal
(13ZR1422500).}

\begin{abstract}

In this paper we simplify the definition of the weighted sum Boolean function which used to be inconvenient to compute and use.
We show that the new function has essentially the same properties as the previous one.
In particular, the bound on the average sensitivity of the weighted sum Boolean function remains unchanged after the simplification.


\end{abstract}

\maketitle \numberwithin{equation}{section}
\newtheorem{theorem}{Theorem}[section]
\newtheorem{lemma}[theorem]{Lemma}
\newtheorem{example}[theorem]{Example}
\allowdisplaybreaks

\section{Introduction}
In previous study, the weighted sum function has a complicated structure.
With a residue ring modulo a prime, the explicit definition of this function
can be given using the weighted sum as follows \cite{SZ}. Let $m \in \mathbb{Z}^+ = \{1,2,3,\dots \}$
and prime number $p \geq m$ where no other prime numbers are between $p$ and $m$.
For vector $X=(x_1,x_2,\dots,x_m)\in \mathbb{Z}_2^m$, where $\mathbb{Z}_2= \{0, 1\}$, let $u(X)$ be the
least positive integer which satisfies
$$u(X)= \sum_{k=1}^m kx_k({\rm mod}\  p), 1\leq u(X) \leq p.$$
Then the weighted sum function $g(X)$ is defined as
$$g(X)=\left\{
\begin{array}{ll}
  x_{u(X)}, \ \  1\leq u(X)\leq m;\\
  x_1, \ \  \hbox{otherwise.}\\
    \end{array}
    \right.
    $$

This function was used to study read-once
branching programs by P.
Savick\'{y} and S. \v{Z}\'{a}k \cite{SZ}. It was also used to demonstrate
the exponential improvement from conventional read-once branching programs to quantum ones
by M. Sauerhoff in \cite{S}, see also \cite{SD}.

To simplify the definition of the previous weighted sum function, we define a new function $f(X)$ as follows.
For $X=(x_0,x_1,\dots,x_{m-1})\in \mathbb{Z}_2^m$, denote
$$s(X)= \sum_{k=0}^{m-1} kx_k({\rm mod}\  m),$$
and define the new weighted sum function
$$f(X)=x_{s(X)}.$$

It is worth noting that this new function $f(X)$ is more convenient
to compute and use than $g(X)$. One particular reason for the prime
modulus in the previous function $g(X)$ is that there are nice
results and structures in prime fields. In this paper we call such
$f(X)$  the simplified weighted sum function. Note that when $m$ is
prime then the two definitions are the same.

In this paper it is shown that the simplified function $f(X)$ has many
similar properties as the previous one $g(X)$. For instance, in
\cite{SZ} the authors used $g(X)$ to establish the lower
bound of read-once branching programs. One of the key ingredients in
their proof is that
\begin{thm} [Dias da Silva and Hamidoune, \cite{DH}]
 Let $\epsilon>0$ be fixed. Then, for every large enough $p$ and $A\subseteq \mathbb{Z}_p$
 with $|A|>(2+\epsilon)\sqrt{p}$, and for every $b\in \mathbb{Z}_p$, there is a subset
$B\subseteq A$ such that the sum of the elements of $B$ is equal to $b$.
\end{thm}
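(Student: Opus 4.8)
The plan is to reduce the statement to the restricted‑sumset theorem of Dias da Silva and Hamidoune, which is the genuine content of \cite{DH}, and then to do an elementary optimization over a single parameter. Recall that for a prime $p$, a subset $A\subseteq\mathbb{Z}_p$ with $|A|=n$, and an integer $h$ with $1\le h\le n$, the $h$-fold restricted sumset
\[
h^{\wedge}A \;:=\; \bigl\{\, {\textstyle\sum_{a\in S}a} \;:\; S\subseteq A,\ |S|=h \,\bigr\}
\]
satisfies $|h^{\wedge}A|\ \ge\ \min\{\,p,\ hn-h^{2}+1\,\}=\min\{\,p,\ h(n-h)+1\,\}$. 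Every element of $h^{\wedge}A$ is the sum of the elements of some $B\subseteq A$ (namely some $B$ of size $h$), so it is enough to produce one value of $h$ for which $h^{\wedge}A$ already equals all of $\mathbb{Z}_p$.

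First I would take $h=\lfloor n/2\rfloor$, the integer maximizing $h(n-h)$, so that $h(n-h)\ge(n^{2}-1)/4$ and hence $h(n-h)+1>n^{2}/4$. By hypothesis $n=|A|>(2+\epsilon)\sqrt{p}$, whence $n^{2}/4>\tfrac{(2+\epsilon)^{2}}{4}\,p>p$; therefore $h(n-h)+1>p$, and the Dias da Silva--Hamidoune bound gives $|h^{\wedge}A|\ge p$. Since $h^{\wedge}A\subseteq\mathbb{Z}_p$, this forces $h^{\wedge}A=\mathbb{Z}_p$, so every $b\in\mathbb{Z}_p$ is a sum of $h$ distinct elements of $A$ and, in particular, a subset sum of $A$. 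The hypothesis "$p$ large enough" plays no essential role here: it only guarantees $n\ge 2$ (so that $h\ge 1$) and excludes the vacuous range $p\le(2+\epsilon)^{2}$.

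The substantive difficulty is not this reduction but the restricted‑sumset inequality on which it rests; a self‑contained treatment would have to prove $|h^{\wedge}A|\ge\min\{p,\,hn-h^{2}+1\}$ from scratch. The cleanest modern route is the polynomial method: assuming for contradiction that $h^{\wedge}A$ lies in a set $C$ with $|C|<hn-h^{2}+1$, one applies the Combinatorial Nullstellensatz to $\prod_{1\le i<j\le h}(x_{j}-x_{i})\cdot\prod_{c\in C}(x_{1}+\cdots+x_{h}-c)$ with each $x_{i}$ ranging over $A$, and derives a contradiction by showing that a suitable top‑degree coefficient (a Vandermonde‑type determinant) is nonzero. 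The original argument of \cite{DH} instead reads the bound off the spectrum of a cyclic derivation acting on $\bigwedge^{h}$ of the group algebra of $\mathbb{Z}_p$. For the purposes of this paper we take the inequality as given and invoke only the one‑line consequence recorded above.
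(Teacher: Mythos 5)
Your reduction is correct, and the arithmetic checks out: with $n=|A|$ and $h=\lfloor n/2\rfloor$ one has $h(n-h)+1>n^2/4>\frac{(2+\epsilon)^2}{4}p>p$, so the Dias da Silva--Hamidoune lower bound $|h^{\wedge}A|\ge\min\{p,\,h(n-h)+1\}$ forces $h^{\wedge}A=\mathbb{Z}_p$, and every $b$ is realized as a subset sum. Note, however, that there is nothing in the paper to compare this against: the statement is quoted verbatim from \cite{DH} as an imported result and is given no proof here (it is mentioned only as motivation, as the key ingredient used by Savick\'y and \v{Z}\'ak). What you have written is essentially the standard derivation of this corollary from the restricted-sumset theorem, which is indeed the main theorem of \cite{DH} (proved there via Grassmann derivatives, and later reproved via the Combinatorial Nullstellensatz, as you say). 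The one honest caveat, which you acknowledge, is that your argument takes the inequality $|h^{\wedge}A|\ge\min\{p,\,hn-h^2+1\}$ as a black box; since that inequality is the entire content of the cited reference, your proof is a correct deduction of the stated corollary rather than a self-contained proof, but that is exactly the level of detail one would expect, and it is more than the paper itself supplies.
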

We note that the work of Freeze, Gao and Geroldinger \cite{FGG} implies the similar result in $\mathbb{Z}_m$.
\begin{thm} [Freeze, Gao and Geroldinger, \cite{FGG}]
 Let $d$ be the smallest prime divisor of $m$. Then, for every $A\subseteq \mathbb{Z}_m$
 with $|A|>\frac m d+d-2$, and for every $b\in \mathbb{Z}_m$, there is a subset
$B\subseteq A$ such that the sum of the elements of $B$ is equal to $b$.
\end{thm}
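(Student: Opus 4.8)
The plan is to recognise the statement as the bound $\mathsf{cr}(\mathbb{Z}_m)\le\frac m d+d-1$ on the critical number of $\mathbb{Z}_m$ (the least $\ell$ for which every $\ell$-element subset of $\mathbb{Z}_m\setminus\{0\}$ realises every element of $\mathbb{Z}_m$ as a subset sum), and to prove it by a Kneser-type argument in which a single genuinely hard case is isolated. We may assume $0\notin A$ (deleting $0$ affects no subset sum); writing $\Sigma_0(A)=\sum_{a\in A}\{0,a\}$ for the set of all subset sums of $A$, the empty sum $0$ included, it then suffices to show that $|A|\ge\frac m d+d-1$ forces $\Sigma_0(A)=\mathbb{Z}_m$.

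Let $H$ be the stabiliser of $\Sigma_0(A)$ and $e=|H|$, a divisor of $m$. Kneser's addition theorem gives
\[
|\Sigma_0(A)|\ \ge\ \sum_{a\in A}\bigl|\{0,a\}+H\bigr|-(|A|-1)\,e .
\]
If $r$ of the $a\in A$ lie outside $H$, the summands equal $2e$ for those and $e$ for the other $|A|-r$ elements, which lie in $H\setminus\{0\}$ since $0\notin A$; hence $|\Sigma_0(A)|\ge(r+1)e$ while also $|A|\le r+(e-1)$. Now split on $e$. If $e=m$, then $\Sigma_0(A)$ is $\mathbb{Z}_m$-periodic and nonempty, so $\Sigma_0(A)=\mathbb{Z}_m$. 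If $1<e<m$ and $\Sigma_0(A)\ne\mathbb{Z}_m$, then $\Sigma_0(A)$ is a union of at most $\frac m e-1$ cosets of $H$, so $(r+1)e\le m-e$, i.e.\ $r\le\frac m e-2$ and $|A|\le\frac m e+e-3$; since $t\mapsto t+\frac m t$ is maximised, over the divisors of $m$ strictly between $1$ and $m$, at $t=d$ (equivalently $t=\frac m d$), this forces $|A|\le\frac m d+d-3$, contradicting the hypothesis. So the claim holds whenever $\Sigma_0(A)$ has nontrivial stabiliser.

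The remaining case $H=\{0\}$ is the crux: there Kneser yields only $|\Sigma_0(A)|\ge|A|+1\ge\frac m d+d$, which need not reach $m$ once $m$ is composite, so triviality of the stabiliser has to be used. I would do this by slicing $A$ along the subgroup $K=d\,\mathbb{Z}_m$ of order $\frac m d$: with $A=\bigsqcup_c A_c$ and $A_c=A\cap(c+K)$, one has $\Sigma_0(A)=\sum_c\Sigma_0(A_c)$, where each $\Sigma_0(A_c)$ meets at most $|A_c|+1$ of the $d$ cosets of $K$. If the slice $A_0=A\cap K$ is large enough, induction on $m$ (using the prime case---Theorem~1.1 and its refinements---when $\frac m d$ is prime) gives $\Sigma_0(A_0)=K$, and provided $|A|-|A_0|\ge d-1$ the other slices span, via Cauchy--Davenport in $\mathbb{Z}_m/K\cong\mathbb{Z}_d$, all $d$ cosets of $K$, so $\Sigma_0(A)=\mathbb{Z}_m$; the complementary configurations, where no slice is large, then have to be pushed through by hand. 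This last analysis is the real obstacle, and it is where the extremal configurations (roughly, a short arithmetic progression together with a subgroup) show the threshold $\frac m d+d-1$ to be sharp. Far more economically, one may instead simply quote the exact evaluation of $\mathsf{cr}(\mathbb{Z}_m)$ due to Freeze, Gao and Geroldinger, of which the stated inequality is an immediate corollary.
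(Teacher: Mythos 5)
The paper does not prove this statement at all: it is quoted verbatim as a known theorem of Freeze, Gao and Geroldinger, with \cite{FGG} as the only justification. So the ``far more economical'' route you mention in your last sentence --- quoting the exact value of the critical number $\mathsf{cr}(\mathbb{Z}_m)=\frac{m}{d}+d-2$ from that paper --- is precisely what the authors do, and to that extent your proposal matches the paper.

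Judged as a self-contained proof, however, the proposal has a genuine gap, and you have named it yourself. The Kneser step is sound and correctly eliminates every case in which the stabiliser $H$ of $\Sigma_0(A)$ is nontrivial: the inequalities $|\Sigma_0(A)|\ge (r+1)e$, $|A|\le r+e-1$, and the maximisation of $e+\frac{m}{e}$ at $e=d$ over proper divisors are all right. (There is a small off-by-one in the reduction: if $0\in A$, deleting it leaves only $|A|\ge\frac{m}{d}+d-2$, not $\frac{m}{d}+d-1$, so you should run the argument at the exact critical-number threshold.) But the case $H=\{0\}$, which you correctly identify as the crux, is not proved: Kneser gives only $|\Sigma_0(A)|\ge|A|+1$, the slicing along $K=d\,\mathbb{Z}_m$ is only sketched, and the ``complementary configurations, where no slice is large'' are exactly where the Diderrich conjecture resisted proof for three decades --- the extremal sets (a subgroup plus a short progression, and near-misses thereof) all live in that case, and handling them is the entire content of the chain of papers culminating in \cite{FGG}. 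A proof that ends with ``this last analysis is the real obstacle'' has not discharged the obstacle. Either complete that case (which would amount to reproving a substantial theorem) or state clearly that the result is being imported from \cite{FGG}, as the paper does.
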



We then determine the average sensitivity of this newly defined function $f(X)$ and
show that it also satisfies the Shparlinski's conjecture \cite{Sh} which says that the average
sensitivity of $f(X)$ is asymptotically $m/2$. We introduce the main concepts of this conjecture in the following.

For an input $X=(x_0,x_1,\dots,x_{m-1})$, the sensitivity $\sigma_{s,
X}(f)$ on $X$ denotes the number of variables such that flipping
one of these variables will shift the value of $f$. Explicitly,
$$\sigma_{s,X}(f)=\sum_{i=0}^{m-1}\left |
 f(X)-f(X^{(i)})\right |,$$
 where $X^{(i)}=(x_0,\dots,x_{i-1},1-x_i,x_{i+1}\dots,x_{m-1})$ is
the vector assignment after flipping the $i$-th coordinate in $X$.
 The sensitivity $\sigma_{s}(f)$ of $f(X)$ denotes the maximum of $\sigma_{s,X}(f)$ on
vector $X$ in $\mathbb{Z}_2^m$ and the average sensitivity
$\sigma_{av}(f)$ is the mean value of sensitivity on every possible
input, i.e.,
$$\sigma_{av}(f)=2^{-m}\sum_{X\in
\mathbb{Z}_2^{m}}\sum_{i=0}^{m-1}\left |
 f(X)-f(X^{(i)})\right |.$$
   Sensitivity, together with a more general concept called block sensitivity, is a useful measure
to predict the complexity of Boolean functions.
It has recently drawn extensive attention, for instance \cite{A, Be, Bd, Bo,
Bu, CG, Li, R, ST, Shi, Sh}. For a good survey on the main unsolved
problems on sensitivity, please refer to \cite{HKP}.

   In \cite{Sh} Shparlinski addressed the average sensitivity problem of the previous weighted sum function $g(X)$
   and obtained a lower bound from a nontrivial bound on its Fourier coefficients using exponential sums methods. He also developed several conjectures on the
   average sensitivity of the weighted sum function
   and the bounds of the Fourier coefficients.
Explicitly, one conjecture was that the average sensitivity of $g(X)$ on $m$ variables is not less than $(\frac 12+o(1))m$. In the same paper he gave a proof that the average sensitivity
 is greater than $\gamma m$, where constant $\gamma$ satisfies $\gamma\approx 0.0575$.

By applying a new sieving technique, in \cite{Li}  the first author gave an asymptotic
counting formulas of the subset sums over prime fields and thus confirmed the
Shparlinski's conjecture on the average sensitivity of the weighted sum function.

In this paper we extend this result for the simplified weighted sum function $f(X)$.
That is, for $f(X)$ with $m$ variables, the average sensitivity of $f(X)$ is exactly
$(1/2+o(1))m$.

In addition, we also compute the weight of $f(X)$.
We prove that the weight of $f(X)$ on $m$ variables  is exactly
$2^{m-1}(1+o(1))$. Thus, $f(X)$ is an asymptotically balanced
function.

This paper is organized as follows. In Section 2 we present a sieve
formula. By applying this formula, we give a series of formulas for
counting subsets sums over cyclic groups in Section 3. The proof of
the main results is given in Section 4. We also list several further questions in Section 5.

\noindent {\bf Notation}. For $x\in\mathbb{R}$, let $(x)_0=1$
 and $(x)_k=x (x-1) \cdots (x-k+1)$
for $k\in $ $\mathbb{Z^+}=\{1,2,3,\ldots\} $. For $k\in
\mathbb{N}=\{0,1,2,\ldots\}$ define the binomial coefficient ${x
\choose k}=\frac {(x)_k}{k!}$.

\section{A distinct coordinate sieving formula}

For the purpose of our proof, we briefly introduce a sieving formula
discovered by Li-Wan \cite{LW2}, which significantly improves the
classical inclusion-exclusion sieving. We cite it here without any
proof. For details and related applications, we refer to
\cite{LW2,LW3}.

Let $S_k$ be the symmetric group on $k$ elements. It is well known
that every permutation $\tau\in S_k$ factorizes uniquely 
as a product of disjoint cycles and  each
  fixed point is viewed as a trivial cycle of length $1$.
 For $\tau\in S_k$, define $\sign(\tau)=(-1)^{k-l(\tau)}$, where $l(\tau)$ is the number of
 cycles of $\tau$ including the trivial cycles.


\begin{thm} \label{thm1.0}
Suppose $X$ is a finite set of vectors of length $k$ over an
alphabet set $D$.  Define $\overline{X}=\{(x_1,x_2,\cdots,x_k)\in X
\ | \ x_i\ne x_j, \forall i\ne j\}.$ Let $f(x_1,x_2,\dots,x_k)$ be a
complex valued function defined over $X$ and $F=\sum_{x \in
\overline{X}}f(x_1,x_2,\dots,x_k).\label{1.00}$. Then
\begin{align}
  \label{1.5} F=\sum_{\tau\in S_k}{\sign(\tau)F_{\tau}},
    \end{align}
    where  \begin{align}\label{1.1}
     X_{\tau}=\left\{
(x_1,\dots,x_k)\in X,
 x_{i_1}=\cdots=x_{i_{a_1}},\cdots, x_{l_1}=\cdots=x_{l_{a_s}}
 \right\},
\end{align}
for a permutation $\tau=(i_1i_2\cdots i_{a_1})
  (j_1j_2\cdots j_{a_2})\cdots(l_1l_2\cdots l_{a_s})$
  with $1\leq a_i, 1 \leq i\leq s$ and
    $F_{\tau}=\sum_{x \in
X_{\tau} } f(x_1,x_2,\dots,x_k). $
 \end{thm}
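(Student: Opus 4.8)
The plan is to establish the identity \eqref{1.5} by a coefficient-comparison argument. I would fix a single vector $x=(x_1,\dots,x_k)\in X$, compute the total coefficient with which the term $f(x)$ occurs on the right-hand side of \eqref{1.5}, and show that this coefficient equals $1$ when $x\in\overline X$ and $0$ otherwise; summing over $x\in X$ then gives the claim. To organize the bookkeeping, attach to $x$ the set partition $P(x)$ of $\{1,2,\dots,k\}$ whose blocks are the level sets of the map $i\mapsto x_i$, so that $i$ and $j$ lie in a common block of $P(x)$ exactly when $x_i=x_j$. Unwinding the definition \eqref{1.1}, the condition $x\in X_\tau$ says precisely that $x$ is constant along each cycle of $\tau$ (trivial cycles imposing no constraint), which is equivalent to every cycle of $\tau$ being contained in a single block of $P(x)$; equivalently, $\tau$ lies in the Young subgroup $S_{P(x)}:=\prod_{B\in P(x)}S_B$, where $S_B$ denotes the symmetric group on the block $B$.

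With this reformulation, the coefficient of $f(x)$ on the right-hand side of \eqref{1.5} is $\sum_{\tau\in S_{P(x)}}\sign(\tau)$. The next step is to exploit the multiplicativity of $\sign$ over the block decomposition: writing $\tau=\prod_{B}\tau_B$ with $\tau_B\in S_B$, one has $l(\tau)=\sum_B l(\tau_B)$ and $k=\sum_B|B|$, whence $\sign(\tau)=(-1)^{k-l(\tau)}=\prod_B(-1)^{|B|-l(\tau_B)}=\prod_B\sign(\tau_B)$ — here I use that for a permutation of an $n$-element set the quantity $(-1)^{n-l}$ is exactly the usual signature, since a cycle of length $\ell$ is a product of $\ell-1$ transpositions and so contributes $(-1)^{\ell-1}$. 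Consequently
\[
\sum_{\tau\in S_{P(x)}}\sign(\tau)=\prod_{B\in P(x)}\Bigl(\sum_{\sigma\in S_{|B|}}\sign(\sigma)\Bigr).
\]

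The final ingredient is the elementary fact that $\sum_{\sigma\in S_n}\sign(\sigma)$ equals $1$ for $n\le 1$ and $0$ for $n\ge 2$ (in the latter case, multiplication by a fixed transposition pairs up the even and odd permutations). Hence the product over blocks equals $1$ precisely when every block of $P(x)$ is a singleton, i.e. when the coordinates of $x$ are pairwise distinct so that $x\in\overline X$, and it vanishes as soon as some block has size at least $2$. Summing $f(x)$ against these coefficients over all $x\in X$ therefore collapses the right-hand side to $\sum_{x\in\overline X}f(x)=F$, which is \eqref{1.5}.

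I do not anticipate a genuine obstacle; the argument is purely combinatorial. The two points that need care are (i) verifying that the membership condition $x\in X_\tau$ is truly equivalent to $\tau\in S_{P(x)}$, with the correct treatment of trivial (length-one) cycles, and (ii) the multiplicativity of $\sign$ across the decomposition into block-wise permutations. Both are routine once the partition $P(x)$ is introduced. The conceptual heart of the statement — and the reason it improves on naive inclusion–exclusion — is the collapse $\sum_{\sigma\in S_n}\sign(\sigma)=0$ for $n\ge 2$, which annihilates in one stroke all contributions coming from vectors with a repeated coordinate.
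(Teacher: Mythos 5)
Your argument is correct and complete. Note that the paper itself does not prove this theorem --- it is explicitly ``cited without any proof,'' with the reader referred to the Li--Wan papers \cite{LW2,LW3} --- so there is no in-paper proof to compare against; your proposal supplies exactly the standard argument for this sieve. The three steps all check out: (i) membership $x\in X_\tau$ is, by the definition in (\ref{1.1}), constancy of $x$ along each cycle of $\tau$ (fixed points giving vacuous conditions), which is indeed equivalent to $\tau$ lying in the Young subgroup $\prod_{B\in P(x)}S_B$ attached to the level-set partition $P(x)$; (ii) since $l(\tau)=\sum_B l(\tau_B)$ and $k=\sum_B|B|$, the paper's $\sign(\tau)=(-1)^{k-l(\tau)}$ factors as $\prod_B\sign(\tau_B)$ and agrees with the usual signature; and (iii) $\sum_{\sigma\in S_n}\sign(\sigma)$ vanishes for $n\ge 2$ and equals $1$ for $n\le 1$, so the coefficient of $f(x)$ on the right-hand side of (\ref{1.5}) is the indicator of $x\in\overline X$. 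This is precisely the mechanism behind the Li--Wan sieve, and your exposition of it is sound.
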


%
%
%

Note that the symmetric group $S_k$ acts on $D^k$ naturally by
permuting coordinates. That is, for $\tau\in S_k$ and
$x=(x_1,x_2,\dots,x_k)\in D^k$,  $\tau\circ
x=(x_{\tau(1)},x_{\tau(2)},\dots,x_{\tau(k)}).$
  A subset $X$ in $D^k$ is said to be symmetric if for any $x\in X$ and
any $\tau\in S_k$, $\tau\circ x \in X $. In particular,  if  $X$ is
symmetric and $f$ is a symmetric function under the action of $S_k$,
we then have the following formula which is simpler than
(\ref{1.5}).
\begin{prop} \label{thm1.1} Let $C_k$ be the set of conjugacy  classes
 of $S_k$.  If $X$ is symmetric and $f$ is symmetric, then
 \begin{align}\label{7} F=\sum_{\tau \in C_k}\sign(\tau) C(\tau)F_{\tau},
  \end{align} where $C(\tau)$ is the number of permutations conjugate to
  $\tau$.
\end{prop}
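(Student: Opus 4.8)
\textbf{Proof proposal for Proposition \ref{thm1.1}.}

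The plan is to derive the formula directly from Theorem \ref{thm1.0} by partitioning the sum over $S_k$ into conjugacy classes, so the only real work is to show that each summand $\sign(\tau)F_\tau$ is a class function of $\tau$. First, recall that $\sign(\tau)=(-1)^{k-l(\tau)}$ depends only on the cycle type of $\tau$, hence only on its conjugacy class; this part is immediate. The substantive step is to prove that $F_\tau$ likewise depends only on the conjugacy class of $\tau$, using the hypotheses that $X$ and $f$ are symmetric under the $S_k$-action on coordinates.

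To that end, I would fix $\tau\in S_k$ and $\sigma\in S_k$ and check the set identity $X_{\sigma\tau\sigma^{-1}}=\sigma\circ X_\tau$. Writing $\tau$ as a product of disjoint cycles, the conjugate $\sigma\tau\sigma^{-1}$ is obtained by applying $\sigma$ to the entries of each cycle, so the system of coordinate equalities defining $X_{\sigma\tau\sigma^{-1}}$ is exactly the system defining $X_\tau$ with indices relabelled by $\sigma$; unravelling the definition of the coordinate action $\tau\circ x=(x_{\tau(1)},\dots,x_{\tau(k)})$ shows $x\in X_{\sigma\tau\sigma^{-1}}$ iff $\sigma^{-1}\circ x\in X_\tau$ (using symmetry of $X$ to guarantee $\sigma^{-1}\circ x$ lands back in $X$). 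Then, since $f$ is symmetric,
\begin{align*}
F_{\sigma\tau\sigma^{-1}}=\sum_{x\in X_{\sigma\tau\sigma^{-1}}}f(x)=\sum_{y\in X_\tau}f(\sigma\circ y)=\sum_{y\in X_\tau}f(y)=F_\tau .
\end{align*}

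With both $\sign(\tau)$ and $F_\tau$ shown to be constant on each conjugacy class, I would finish by regrouping the sum in \eqref{1.5}: choosing a representative for each class $c\in C_k$ and noting that the number of permutations in $c$ is precisely $C(\tau)$, the count of permutations conjugate to $\tau$, gives
\begin{align*}
F=\sum_{\tau\in S_k}\sign(\tau)F_\tau=\sum_{\tau\in C_k}C(\tau)\,\sign(\tau)\,F_\tau,
\end{align*}
which is \eqref{7}. The only place demanding care is the bookkeeping in the identity $X_{\sigma\tau\sigma^{-1}}=\sigma\circ X_\tau$ — getting the direction of the relabelling right and invoking symmetry of $X$ at the correct moment — but this is elementary once the definitions are written out explicitly; there is no analytic or combinatorial obstacle beyond that.
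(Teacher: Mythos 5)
Your proof is correct and is exactly the intended derivation: the paper itself states this result without proof (citing Li--Wan), and it is meant to follow from Theorem \ref{thm1.0} by observing that $\sign(\tau)$ and $F_\tau$ are class functions under the symmetry hypotheses and then regrouping the sum over $S_k$ by conjugacy classes, which is precisely what you do. The only point needing care is the one you flag yourself --- whether $x\in X_{\sigma\tau\sigma^{-1}}$ corresponds to $\sigma\circ x$ or $\sigma^{-1}\circ x$ lying in $X_\tau$ --- and this does not affect validity, since in either case the two sets are related by a coordinate-permuting bijection of the symmetric set $X$ that preserves the symmetric function $f$, so $F_{\sigma\tau\sigma^{-1}}=F_\tau$ holds regardless.
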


For the purpose of evaluating the above summation, we need several
combinatorial formulas. Recall that a permutation $\tau\in S_k$ is
said to be of type $(c_1,c_2,\cdots,c_k)$ if $\tau$ has exactly
$c_i$ cycles of length $i$ and that $\sum_{i=1}^k ic_i=k$. Let
$N(c_1,c_2,\dots,c_k)$ be the number  of permutations in $S_k$ of
type $(c_1,c_2,\dots,c_k)$ and it is well-known  that
$$N(c_1,c_2,\dots,c_k)=\frac {k!} {1^{c_1}c_1! 2^{c_2}c_2!\cdots k^{c_k}c_k!}.$$

\begin{lem} \label{lem2.6}
Define the generating function
\begin{align*}C_k(t_1,t_2,\dots,t_k)= \sum_{\sum
ic_i=k} N(c_1,c_2,\dots,c_k)t_1^{c_1}t_2^{c_2}\cdots t_k^{c_k}.
 \end{align*}
If $t_1=t_2=\cdots=t_k=q$, then we have
\begin{align*}
C_k(q,q,\dots,q) &=\sum_{\sum
ic_i=k} N(c_1,c_2,\dots,c_k)q^{c_1}q^{c_2}\cdots q^{c_k}\nonumber \\
&=(q+k-1)_k.
 \end{align*}
 In another case, if $t_i=q$ for $d\mid i$ and $t_i=s$ for $d\nmid
i$, then we have
\begin{align*}
C_k(\overbrace{s,\cdots,s}^{d-1},q,\overbrace{s,\cdots,s}^{d-1},q,
\cdots) &= \sum_{\sum
ic_i=k} N(c_1,c_2,\cdots,c_k)q^{c_1}q^{c_2}\cdots s^{c_d}q^{c_{d+1}}\cdots\nonumber \\
&= k!\sum_{i=0}^{\lfloor k/d \rfloor}{\frac{q-s}{d}+i-1\choose
\frac{q-s}{d}-1} {s+k-di-1\choose s-1}\\
&\leq k!{s+k+(q-s)/d-1\choose k}\\
&=(s+k+(q-s)/d-1)_k.
 \end{align*}
\end{lem}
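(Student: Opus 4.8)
The plan is to treat the two assertions separately, both via the exponential-generating-function identity for cycle index polynomials. The first step is to recall (or re-derive) the classical identity
\begin{align*}
\sum_{k\ge 0}\frac{u^k}{k!}\,C_k(t_1,t_2,\dots)=\exp\!\Bigl(\sum_{i\ge 1}\frac{t_i u^i}{i}\Bigr),
\end{align*}
which follows from $N(c_1,c_2,\dots)=k!/\prod_i i^{c_i}c_i!$ by expanding each exponential factor as a sum over $c_i\ge 0$ and collecting the coefficient of $u^k$. For the first case, substituting $t_i=q$ for all $i$ gives $\exp(-q\log(1-u))=(1-u)^{-q}$, and extracting the coefficient of $u^k/k!$ yields $q(q+1)\cdots(q+k-1)=(q+k-1)_k$ in the paper's falling-factorial notation. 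This is the short, routine part.

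For the second case, with $t_i=q$ when $d\mid i$ and $t_i=s$ otherwise, I would split the inner sum $\sum_i t_i u^i/i$ into the part over multiples of $d$ and the rest: $\sum_{d\mid i} q u^i/i + \sum_{d\nmid i} s u^i/i = \frac{q-s}{d}\sum_{j\ge 1}\frac{(u^d)^j}{j} + s\sum_{i\ge 1}\frac{u^i}{i}$, so the generating function factorizes as $(1-u^d)^{-(q-s)/d}(1-u)^{-s}$. Extracting the coefficient of $u^k$ then produces a Cauchy-product convolution: the $(1-u^d)^{-(q-s)/d}$ factor contributes $\binom{(q-s)/d+i-1}{(q-s)/d-1}$ at degree $di$, and the $(1-u)^{-s}$ factor contributes $\binom{s+(k-di)-1}{s-1}$ at degree $k-di$. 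Summing over $0\le i\le \lfloor k/d\rfloor$ and multiplying by $k!$ gives exactly the claimed equality. The final inequality then reduces to showing $\sum_{i}\binom{(q-s)/d+i-1}{(q-s)/d-1}\binom{s+k-di-1}{s-1}\le \binom{s+k+(q-s)/d-1}{k}$; I would obtain this by comparing with the unrestricted Vandermonde-type convolution — replacing the $d$-spacing in the first factor by unit spacing only enlarges the set of terms, or equivalently by noting that $(1-u^d)^{-(q-s)/d}$ is coefficientwise dominated by $(1-u)^{-(q-s)/d}$, whose product with $(1-u)^{-s}$ is $(1-u)^{-(s+(q-s)/d)}$ with coefficient $\binom{s+k+(q-s)/d-1}{k}$. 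Rewriting that last binomial as $(s+k+(q-s)/d-1)_k/k!$ cancels the $k!$ and gives the stated bound.

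The main obstacle I anticipate is purely bookkeeping rather than conceptual: one must be careful that $(q-s)/d$ is a nonnegative integer in the intended application (so that the binomial coefficients are the ordinary ones and the coefficientwise domination $[u^n](1-u^d)^{-a}\le[u^n](1-u)^{-a}$ is valid for integer $a\ge 0$), and that the falling-factorial symbol $(x)_k$ is being used with the convention $(x)_k=x(x-1)\cdots(x-k+1)$ so that $\binom{x}{k}=(x)_k/k!$ matches. Apart from verifying these conventions and the edge cases $k<d$ (where the sum has a single term $i=0$ and both sides collapse), every step is a direct generating-function manipulation.
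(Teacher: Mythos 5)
The paper states Lemma \ref{lem2.6} without any proof at all---it is imported from the Li--Wan references \cite{LW2,LW3}---so there is no in-paper argument to compare yours against. Your proposal is correct and is the standard derivation: the exponential generating function $\sum_k C_k u^k/k!=\exp(\sum_i t_iu^i/i)$ specializes to $(1-u)^{-q}$ in the first case and factors as $(1-u^d)^{-(q-s)/d}(1-u)^{-s}$ in the second, and extracting $[u^k]$ gives exactly the stated Pochhammer value and the convolution sum. Your justification of the final inequality via the factorization $(1-u)^{-(s+(q-s)/d)}=(1-u^d)^{-(q-s)/d}(1-u)^{-s}\,(1+u+\cdots+u^{d-1})^{(q-s)/d}$, whose last factor has nonnegative coefficients and constant term $1$, is the cleanest route, and you are right to flag that this step (and the binomial coefficients with lower index $s-1$ and $(q-s)/d-1$) tacitly assumes $s$ and $(q-s)/d$ are nonnegative integers---a hypothesis the lemma never states and which is not obviously satisfied where the paper applies it with $q=n$, $s=\Phi(D)$, $d=r$. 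That caveat is a genuine gap in the paper's statement rather than in your argument; for the purely polynomial identities (the first claim and the convolution formula) integrality is not needed, since both sides are polynomials in $q$ and $s$ and agree on all nonnegative integer specializations.
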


\section{Subset Sum Problem in a Subset of the Cyclic Groups}

 Let $\mathbb{Z}_m$ be the cyclic group of $m$ elements.
 Let $D\subseteq \mathbb{Z}_m$ be a nonempty subset of cardinality $n$. Let
$\widehat{\mathbb{Z}}_m$ be the group of additive characters of $\mathbb{Z}_m$, i.e, all the homomorphisms from $\mathbb{Z}_m$  to the nonzero complex numbers $\mathbb{C}^*$.  Note that
$\widehat{\mathbb{Z}}_m$ is isomorphic to $\mathbb{Z}_m$. Define $s_{\chi}(D)=\sum_{a\in
D}\chi{(a)}$ and  $\Phi(D)=\max_{\chi\in \widehat{\mathbb{Z}}_m, \chi\ne
\chi_0}|s_{\chi}(D)|$. Let $N(k, b, D)$ be the number of $k$-subsets
$T\subseteq D$ such that $\sum_{x\in S}x=b$.
In the following theorem we will give an asymptotic bound for $N(k, b,
D)$ which ensures $N(k, b, D)>0$ when $\mathbb{Z}_m-D$ is not too large compared with $\mathbb{Z}_m$.

 \begin{thm}\label{lem1.1}Let $N(k, b, D)$ be defined as above.
\begin{align}\label{4.1}
\left | N(k, b, D)-{m^{-1}} {n \choose k}\right| &\leq  \frac 1 {m}\sum_{1<r\leq m \atop r\mid
m} \phi(r){\frac {n+\Phi(D)}{r}+k-1 \choose k},
 \end{align}
where $d$ is the smallest prime divisor of $m$.
  \end{thm}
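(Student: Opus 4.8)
The plan is to combine the orthogonality of the characters of $\mathbb{Z}_m$ with the distinct-coordinate sieve of Section~2. For a character $\chi$ let $e_k(\chi)=\sum_{T\subseteq D,\ |T|=k}\prod_{a\in T}\chi(a)$ denote the $k$-th elementary symmetric function of the multiset $\{\chi(a):a\in D\}$. Expanding the indicator of the event $\sum_{x\in T}x=b$ by orthogonality of characters gives
\begin{align*}
N(k,b,D)&=\frac1m\sum_{\chi\in\widehat{\mathbb{Z}}_m}\overline{\chi(b)}\,e_k(\chi)\\
&=\frac1m\binom nk+\frac1m\sum_{\chi\neq\chi_0}\overline{\chi(b)}\,e_k(\chi),
\end{align*}
where the principal character $\chi_0$ contributes exactly the main term $m^{-1}\binom nk$, since $e_k(\chi_0)=\binom nk$. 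So the problem reduces to bounding $|e_k(\chi)|$ for each nontrivial $\chi$ and then summing over $\chi$.

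To estimate $e_k(\chi)$ I would feed it into Proposition~\ref{thm1.1}, applied with $X=D^k$, with $\overline X$ the set of tuples in $X$ having pairwise distinct coordinates, and with the symmetric function $(x_1,\dots,x_k)\mapsto\prod_{i=1}^k\chi(x_i)$. Since each $k$-subset of $D$ is the coordinate set of exactly $k!$ tuples of $\overline X$, the sum $F=\sum_{x\in\overline X}\prod_{i=1}^k\chi(x_i)$ equals $k!\,e_k(\chi)$, and Proposition~\ref{thm1.1} gives
\begin{align*}
k!\,e_k(\chi)&=\sum_{\tau\in C_k}\sign(\tau)\,C(\tau)\,F_\tau,\\
F_\tau&=\prod_{j\ge1}\Bigl(\sum_{a\in D}\chi(a)^{j}\Bigr)^{c_j(\tau)}=\prod_{j\ge1}\bigl(s_{\chi^{j}}(D)\bigr)^{c_j(\tau)},
\end{align*}
because on $X_\tau$ each coordinate is constant along a cycle of $\tau$, and a cycle of length $j$ carrying the value $a$ contributes the factor $\chi(a)^{j}=\chi^{j}(a)$. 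Now let $r=\order(\chi)$, a divisor of $m$ with $r>1$. Then $\chi^{j}$ is the principal character precisely when $r\mid j$, in which case $s_{\chi^{j}}(D)=n$, whereas for $r\nmid j$ the character $\chi^{j}$ is nontrivial, so $|s_{\chi^{j}}(D)|\le\Phi(D)$. Hence $|F_\tau|\le n^{\sum_{r\mid j}c_j}\,\Phi(D)^{\sum_{r\nmid j}c_j}$, and summing over conjugacy classes turns $\sum_{\tau\in C_k}C(\tau)|F_\tau|$ into the specialization of the generating function $C_k(t_1,\dots,t_k)$ of Lemma~\ref{lem2.6} with $t_i=n$ for $r\mid i$ and $t_i=\Phi(D)$ otherwise; that is, the second case of Lemma~\ref{lem2.6} with $d=r$, $q=n$, $s=\Phi(D)$, which is bounded there by $k!\binom{(n+\Phi(D))/r+k-1}{k}$.

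Dividing by $k!$, we get $|e_k(\chi)|\le\binom{(n+\Phi(D))/r+k-1}{k}$ whenever $\order(\chi)=r$, and since for each divisor $r>1$ of $m$ there are exactly $\phi(r)$ characters of order $r$, summing over $\chi\neq\chi_0$ and dividing by $m$ produces the claimed inequality~\eqref{4.1}. The step I expect to be the main obstacle is this sieve evaluation: one must keep careful track of which cycle lengths $j$ make $\chi^{j}$ principal --- this is governed by $\gcd(j,r)$, and only the multiples of $r$ produce the ``large'' factor $n$ --- and then identify the resulting cycle-type sum with exactly the right specialization in Lemma~\ref{lem2.6}, so that its Vandermonde-type collapse can be invoked; the remaining steps are routine manipulations with orthogonality and the triangle inequality. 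It is also convenient to record the duality $|e_k(\chi)|=|e_{n-k}(\overline\chi)|$, which follows from $\prod_{a\in D}(1+t\chi(a))=\bigl(\prod_{a\in D}\chi(a)\bigr)\sum_{j\ge0}e_{n-j}(\overline\chi)\,t^{j}$ and lets one always run the estimate with $\min(k,n-k)$ in place of $k$.
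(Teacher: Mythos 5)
Your proposal follows essentially the same route as the paper's own proof: orthogonality of characters isolates the main term $m^{-1}\binom{n}{k}$, the distinct-coordinate sieve (Corollary \ref{thm1.1}) converts the nontrivial-character contribution into a sum over cycle types with $F_\tau=\prod_{j}\bigl(s_{\chi^{j}}(D)\bigr)^{c_j}$, and the second case of Lemma \ref{lem2.6} with $d=r$, $q=n$, $s=\Phi(D)$, combined with the fact that there are $\phi(r)$ characters of order $r$, gives the stated bound. The only cosmetic difference is that you package the character sum as the elementary symmetric function $e_k(\chi)$ rather than working directly with $k!\,N(k,b,D)$; the duality remark at the end is an aside not used in the paper.
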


\begin{proof}
Let $X=D\times D \times \cdots \times D$ be the Cartesian product of
$k$ copies of $D$.
 Let $  \overline{X} =\left\{ (x_1,x_2,\dots,x_{k} )\in D^k \mid
 x_i\not=x_j,~ \forall i\ne j\} \right\}.$ It is clear that $|X|=n^k$ and
$|\overline{X}|=(n)_k$. Applying the orthogonal relation
$\sum_{\psi\in \widehat{\mathbb{Z}}_m }\psi(a)=0$ for $a\not\equiv 0
 (\bmod \ m)$ and $\sum_{\psi\in \widehat{\mathbb{Z}}_m }\psi(a)=m$ for $a\equiv 0
 (\bmod \ m)$, we have

\begin{align*}
k!N(k, b, D)&={m^{-1}} \sum_{(x_1, x_2,\dots x_k) \in
\overline{X}}
\sum_{\chi\in \widehat{\mathbb{Z}}_m}\chi(x_1+x_2+\cdots +x_k-b)\\
&={m^{-1}} (n)_k+m^{-1} \sum_{\chi\ne \chi_0}\sum_{(x_1,
x_2,\cdots x_k) \in\overline{X}}\chi(x_1)\chi(x_2)\cdots \chi(x_k)\chi^{-1}(b)\\
&={m^{-1}}  {(n)_k}+{m^{-1}} \sum_{\chi\ne
\chi_0}\chi^{-1}(b)\sum_{(x_1,x_2,\dots x_k)
\in\overline{X}}\prod_{i=1}^{k} \chi(x_i).
\end{align*}
Denote $f_{\chi}(x)= f_{\chi}(x_1,x_2,\dots,x_{k})=
\prod_{i=1}^{k}\chi(x_i).$  For
  $\tau\in S_k$,  let
$$F_{\tau}(\chi)=\sum_{x\in X_{\tau}}f_{\chi}(x)=\sum_{x \in X_{\tau}}\prod_{i=1}^{k} \chi(x_i),$$
where $X_{\tau}$ is defined as in (\ref{1.1}). Obviously $X$ is
symmetric and $f_{\chi}(x_1,x_2,\dots,x_{k})$ is normal on $X$.
Applying (\ref{7}) in Corollary \ref{thm1.1}, we get
  \begin{align*}
 k!N(k, b, D)&={m^{-1}} {(n)_k}+{m^{-1}} \sum_{\chi\ne \chi_0}\chi^{-1}(b) \sum_{\tau\in
C_{k}}\sign(\tau)C(\tau) F_{\tau}(\chi),
  \end{align*}
 where $C_{k}$ is the set of conjugacy classes
 of $S_{k}$, $C(\tau)$ is the number of permutations conjugate to $\tau$.  If $\tau$
 is of type $(c_1, c_2, \dots, c_k)$, then
  \begin{align*}
F_{\tau}(\chi)&=\sum_{x \in X_{\tau}}\prod_{i=1}^{k} \chi(x_i)\\
&=\sum_{x \in X_{\tau}}\prod_{i=1}^{c_1} \chi(x_i)\prod_{i=1}^{c_2}
\chi^2(x_{c_1+2i})\cdots\prod_{i=1}^{c_k} \chi^k(x_{c_1+c_2+\cdots+k i})\\
 &=\prod_{i=1}^{k}(\sum_{a\in D}\chi^i(a))^{c_i}\\
 &= n^{\sum c_i m_i(\chi)} s_{\chi}(D)^{{\sum c_i (1-m_i(\chi))}},
\end{align*}
where $m_i(\chi)=1$ if $\chi^i=1$ and otherwise $m_i(\chi)=0$.

  Now suppose $\order(\chi)=r$ with $d \leq r\mid m$. Note that
$C(\tau)=N(c_1,c_2,\dots,c_k)$ and by  Lemma \ref{lem2.6} we have
\begin{align*}
&\sum_{\tau\in C_{k}}\sign(\tau)C(\tau) F_{\tau}(\chi)\\
&\leq\sum_{\tau\in C_{k}}C(\tau) n^{\sum c_i m_i(\chi)}
\Phi(D)^{{\sum c_i (1-m_i(\chi))}}\\
&\leq k!{\frac {n+\Phi(D)}{r}+k-1 \choose k}.
\end{align*}

Similarly, if $\order(\chi)$ is greater than $k$,  then
\begin{align*}
\sum_{\tau\in C_{k}}\sign(\tau)C(\tau) F_{\tau}(\chi)\leq k!{
\Phi(D)+k-1 \choose k}.
\end{align*}

 Note that there are $\phi(r)$ characters of order $r$.
 Summing over all nontrivial characters,  we obtain
\begin{align*}
\left | N(k, b, D)-{m^{-1}} {n \choose k}\right| &\leq  \frac 1 {m}\sum_{1<r\leq m \atop r\mid
m} \phi(r){\frac {n+\Phi(D)}{r}+k-1 \choose k},
 \end{align*}
 where $\phi(r)$ is the Euler function.
 This completes the proof.
\end{proof}

\begin{cor}We have
\begin{align*}
\left | N(k, b, D)-{m^{-1}} {n \choose k}\right| &\leq   {\frac {n+\Phi(D)}{d}+k-1 \choose k},
 \end{align*}
 where $d$ is the minimum prime divisor of $m$.
\end{cor}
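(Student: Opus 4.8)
The plan is to deduce this corollary directly from Theorem~\ref{lem1.1} by replacing every summand in the bound $\frac1m\sum_{1<r\mid m}\phi(r)\binom{(n+\Phi(D))/r+k-1}{k}$ by the single largest one. First I would note that since $d$ is the smallest prime divisor of $m$, every divisor $r$ of $m$ with $r>1$ satisfies $r\ge d$; in particular $(n+\Phi(D))/r\le (n+\Phi(D))/d$.

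Next I would invoke the monotonicity of the map $x\mapsto\binom{x+k-1}{k}=\frac{(x+k-1)(x+k-2)\cdots x}{k!}$, which is increasing for $x>0$ because each of its $k$ factors is positive and increasing in $x$. Since $n\ge 1$ and $\Phi(D)\ge 0$, we have $n+\Phi(D)>0$, so dividing by a larger index only shrinks the argument, giving $\binom{(n+\Phi(D))/r+k-1}{k}\le\binom{(n+\Phi(D))/d+k-1}{k}$ for every $r$ occurring in the sum. Hence the whole right-hand side of \eqref{4.1} is at most $\binom{(n+\Phi(D))/d+k-1}{k}\cdot\frac1m\sum_{1<r\mid m}\phi(r)$.

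Finally I would use the classical identity $\sum_{r\mid m}\phi(r)=m$, which yields $\sum_{1<r\mid m}\phi(r)=m-1$, so the prefactor equals $\frac{m-1}{m}<1$. Combining this with the previous estimate gives $\bigl|N(k,b,D)-m^{-1}\binom nk\bigr|\le\binom{(n+\Phi(D))/d+k-1}{k}$, as claimed.

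I do not expect any genuine obstacle here: the argument is just monotonicity of a binomial coefficient together with $\sum_{r\mid m}\phi(r)=m$. The only point that warrants a line of care is verifying that $\binom{x+k-1}{k}$ is monotone in the real variable $x$ over the range $x=(n+\Phi(D))/r>0$ that actually arises, rather than merely at integer values of $x$, and this is immediate from the product formula above.
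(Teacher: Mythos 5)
Your argument is correct and is exactly the deduction the paper intends (the corollary is stated there without proof as an immediate consequence of Theorem~\ref{lem1.1}): bound each term using $r\ge d$ and the monotonicity of $x\mapsto\binom{x+k-1}{k}$ for $x>0$, then use $\sum_{1<r\mid m}\phi(r)=m-1<m$. No gaps; your extra care about monotonicity in the real variable $x$ is appropriate since $(n+\Phi(D))/r$ need not be an integer.
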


\begin{cor}If $|D|=m-c$ and $c$ is a positive constant, noting that
$\Phi(D)\leq c$ we have
\begin{align*}
\left | N(k, b, D)-{m^{-1}} {m-c \choose k}\right| &\leq   {\frac
{m}{d}+k-1 \choose k},
 \end{align*}
 where $d$ is the minimum prime divisor of $m$.
\end{cor}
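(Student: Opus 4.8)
The plan is to derive the statement directly from the preceding corollary, specialized to $n=|D|=m-c$; the only extra inputs needed are the bound $\Phi(D)\leq c$ and the monotonicity of the binomial coefficient in its (real) top argument.

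First I would justify $\Phi(D)\leq c$. Fix a character $\chi\in\widehat{\mathbb{Z}}_m$ with $\chi\ne\chi_0$. By the orthogonality relation $\sum_{a\in\mathbb{Z}_m}\chi(a)=0$ already used in the proof of Theorem \ref{lem1.1}, we have $s_\chi(D)=\sum_{a\in D}\chi(a)=-\sum_{a\in\mathbb{Z}_m-D}\chi(a)$. Since $|\mathbb{Z}_m-D|=m-n=c$ and $|\chi(a)|=1$ for every $a\in\mathbb{Z}_m$, the triangle inequality gives $|s_\chi(D)|\leq c$, and taking the maximum over all nontrivial $\chi$ yields $\Phi(D)\leq c$.

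Next I would substitute $n=m-c$ and this bound into the preceding corollary, which then reads $\left| N(k,b,D)-m^{-1}{m-c\choose k}\right|\leq {(n+\Phi(D))/d+k-1\choose k}$. Here $n+\Phi(D)\leq(m-c)+c=m$, so $(n+\Phi(D))/d\leq m/d$. It remains to note that for a fixed $k\in\mathbb{N}$ the map $x\mapsto {x\choose k}=(x)_k/k!$ is nondecreasing on $[k-1,\infty)$: on that interval each of the factors $x,x-1,\dots,x-k+1$ of $(x)_k$ is nonnegative and nondecreasing in $x$, hence so is their product. Since $(n+\Phi(D))/d+k-1\geq k-1$, applying this monotonicity to the inequality $(n+\Phi(D))/d+k-1\leq m/d+k-1$ yields ${(n+\Phi(D))/d+k-1\choose k}\leq {m/d+k-1\choose k}$, which is precisely the asserted estimate.

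There is no genuine difficulty here: the statement is a routine specialization of the corollary immediately preceding it. The two points worth spelling out are the elementary character-sum estimate $\Phi(D)\leq c$ — which is where the hypothesis that $\mathbb{Z}_m-D$ has constant size $c$ enters — and the monotonicity of ${x\choose k}$ in the real variable $x$ on $[k-1,\infty)$, which is what lets one replace the awkward quantity $(n+\Phi(D))/d$ by the clean bound $m/d$ and thereby decouple the error term from the specific set $D$.
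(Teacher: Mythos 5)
Your proof is correct and follows exactly the route the paper intends: the corollary is stated as an immediate specialization of the preceding one (the paper gives no separate proof beyond the hint ``noting that $\Phi(D)\leq c$''), and you supply precisely the two small justifications that are left implicit, namely the orthogonality/triangle-inequality bound $\Phi(D)\leq c$ and the monotonicity of $x\mapsto\binom{x}{k}$ on $[k-1,\infty)$. Nothing further is needed.
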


A simple combinatorial arguments on sums of binomial coefficients
gives
  \begin{cor}\label{cor3.4}
 Let $n=m-o(m)$. Let $N(b, D)=\sum_{k=0}^n N(k, b, D)$ be the number of subsets in $D$
 which sums to $b$. Then
 $N(b, D)=\frac {2^n} m (1+o(1)).$
 \end{cor}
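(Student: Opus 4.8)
The plan is to recover $N(b,D)$ by summing the size-$k$ estimate of Theorem~\ref{lem1.1} over $k$. Since every subset of $D$ has at most $n=|D|$ elements, $N(b,D)=\sum_{k=0}^{n}N(k,b,D)$. Summing the main terms of Theorem~\ref{lem1.1} gives $\sum_{k=0}^{n}\frac1m\binom{n}{k}=\frac{2^{n}}{m}$ by the binomial theorem, and summing the error bounds, via the triangle inequality together with the parallel-summation identity $\sum_{k=0}^{n}\binom{M+k-1}{k}=\binom{M+n}{n}$ applied with $M=\frac{n+\Phi(D)}{r}$ for each divisor $r>1$ of $m$, yields
\[
\left|\,N(b,D)-\frac{2^{n}}{m}\,\right|\;\le\;\frac{1}{m}\sum_{\substack{1<r\le m\\ r\mid m}}\phi(r)\binom{\frac{n+\Phi(D)}{r}+n}{n}.
\]
So the whole problem reduces to showing that this right-hand side is $o(2^{n}/m)$.

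For that estimate, note that $\binom{t+n}{n}$ is increasing in $t\ge0$, so the term $r=d$ (the least prime divisor of $m$) is the largest, and since $\sum_{r\mid m,\,r>1}\phi(r)=m-1$ the full sum is at most $\binom{\frac{n+\Phi(D)}{d}+n}{n}$. Using $\Phi(D)\le m-n=o(m)$ and $n=m-o(m)$ we have $\frac{n+\Phi(D)}{d}+n=(1+\tfrac1d+o(1))n$, and a Stirling/entropy estimate bounds $\binom{\frac{n+\Phi(D)}{d}+n}{n}$ by $2^{(g(d)+o(1))n}$, where $g(d)=(1+\tfrac1d)H\!\big(\tfrac{1}{1+1/d}\big)$ and $H$ is the binary entropy function. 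One checks $g(d)<1$ as soon as $d\ge5$; in that regime the error is $o(2^{n}/m)$ and the corollary follows — in particular for every prime $m$ and every $m$ coprime to $6$.

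\textbf{The main obstacle} is the case $2\mid m$ or $3\mid m$, where $g(d)\ge g(2)>1$, so this term-by-term bound is genuinely too weak: the step $\bigl|\sum_{\tau}\sign(\tau)C(\tau)F_{\tau}(\chi)\bigr|\le\sum_{\tau}|\cdots|$ discards the cancellation carried by $\sign(\tau)$, which is precisely what makes $N(b,D)$ so close to $2^n/m$. The remedy is to carry out the $k$-summation inside the sieve. From the identity in the proof of Theorem~\ref{lem1.1}, $k!\,N(k,b,D)=\frac1m(n)_k+\frac1m\sum_{\chi\ne\chi_0}\chi^{-1}(b)\sum_{\tau\in C_k}\sign(\tau)C(\tau)F_{\tau}(\chi)$ with $F_{\tau}(\chi)=\prod_{i}\big(\sum_{a\in D}\chi^{i}(a)\big)^{c_i}$, the inner sum over $C_k$ equals $(-1)^{k}\,k!\,[x^{k}]\exp\!\big(-\sum_{i\ge1}v_i(\chi)\tfrac{x^{i}}{i}\big)$ with $v_i(\chi)=\sum_{a\in D}\chi^{i}(a)$, and summing over $k$ replaces $[x^k]$ by evaluation at $x=-1$, i.e.\ it reproduces $\prod_{a\in D}(1+\chi(a))$. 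For $\chi$ of even order the contribution of the indices $i$ divisible by $\order(\chi)$ to the exponent $-\sum_i v_i(\chi)(-1)^i/i$ is $-\infty$ (those $v_i(\chi)$ equal $n$ and the corresponding signs are all $+1$), so such $\chi$ contribute exactly $0$; for $\chi$ of odd order $r>1$ that part equals $\tfrac nr\ln 2$, and the remaining part is $o(n)$ in size, obtained by Abel summation from the fact that $i\mapsto v_i(\chi)$ is periodic of period $r$ with mean zero off the diagonal, so $\bigl|\prod_{a\in D}(1+\chi(a))\bigr|\le 2^{n/r+o(n)}\le 2^{n/3+o(n)}$. Summing over the fewer than $m$ nontrivial characters and dividing by $m$ then gives an error $o(2^{n}/m)$. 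The single delicate point is making this off-diagonal bound, and hence all the $o(\cdot)$ terms, uniform in $m$ (it is immediate when $m-n$ is bounded, which is the case needed for the applications); the rest is bookkeeping.
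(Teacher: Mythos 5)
Your diagnosis is correct and exposes a real gap: the paper supplies no proof of this corollary beyond the phrase ``a simple combinatorial argument on sums of binomial coefficients,'' which can only mean the term-by-term summation you perform first, and your entropy computation showing that the resulting bound $\binom{\frac{n+\Phi(D)}{d}+n}{n}=2^{(g(d)+o(1))n}$ exceeds $2^{n}$ for $d\in\{2,3\}$ is accurate. The failure is in fact visible before summing over $k$: for even $m$ the $r=2$ term of the error bound in Theorem \ref{lem1.1} at $k\approx n/2$ is $\frac1m\binom{\frac{n+\Phi(D)}{2}+k-1}{k}\approx\frac1m\binom{n}{n/2}$, the same order as the main term, so the theorem as stated gives nothing in the middle range of $k$ when $2\mid m$. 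Your remedy --- carrying out the $k$-summation inside the character sum, which collapses the sieve back to the elementary identity $N(b,D)=\frac1m\sum_{\chi}\chi^{-1}(b)\prod_{a\in D}(1+\chi(a))$ --- is a genuinely different route from anything in the paper, and it is the right one: it retains the cancellation over $\tau$ that the triangle inequality discards.

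Two points in your sketch need repair, and both can be fixed at once. The claim that characters of even order contribute exactly $0$ requires $D$ to meet the fibre $\chi^{-1}(-1)$, which has $m/\order(\chi)$ elements; when $\order(\chi)$ is comparable to $m$ this fibre may lie entirely outside $D$ even when $m-n$ is bounded (for $m$ even, $\chi$ faithful and $D=\mathbb{Z}_m\setminus\{m/2\}$ one gets $\prod_{a\in D}(1+\chi(a))=m$, not $0$). Likewise the $\exp$/Abel-summation formulation is delicate precisely because some factors $1+\chi(a)$ can vanish. Both issues disappear if you bound the product directly: a character $\chi$ of order $r$ takes each value $\zeta_r^j$ (where $\zeta_r=e^{2\pi i/r}$) on exactly $m/r$ elements of $\mathbb{Z}_m$, and $|1+\zeta_r^j|>1$ for at most $\frac{2r+1}{3}$ values of $j$, so
$$\Bigl|\prod_{a\in D}(1+\chi(a))\Bigr|\le\prod_{j:\,|1+\zeta_r^j|>1}|1+\zeta_r^j|^{m/r}\le 2^{\frac{2m}{3}+\frac{m}{3r}}\le 2^{\frac{5m}{6}},$$
uniformly over all nontrivial $\chi$ and all $m$. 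Since $n=m-o(m)$, summing over the at most $m-1$ nontrivial characters gives an error at most $\frac{m-1}{m}\,2^{\frac{5m}{6}}=o(2^{n}/m)$, which proves the corollary with no case distinction on $d$ and none of the uniformity worries you flagged.
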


\section{Average sensitivity}

In \cite{Li}, the weight and the average sensitivity of
$g(X)$ are computed. We now generalize these results to the
simplified function $f(X)$.
 We first compute the weight of $f(X)$.

\begin{thm}Let $f(X)$ be defined as above. Then we have
$$wt(f)={2^{m-1}}(1+o(1)).$$
In other words, $f(X)$ is an asymptotically balanced function.
\end{thm}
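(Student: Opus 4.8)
The plan is to partition the inputs $X\in\mathbb{Z}_2^m$ according to the value of the pointer $s(X)\in\{0,1,\dots,m-1\}$, and to reduce the count of each class to a subset‑sum count already handled in Section~3. Since every $X$ has a well‑defined pointer, we may write
$$wt(f)=\#\{X\in\mathbb{Z}_2^m:\ x_{s(X)}=1\}=\sum_{j=0}^{m-1}M_j,\qquad M_j:=\#\{X\in\mathbb{Z}_2^m:\ s(X)=j,\ x_j=1\}.$$
So it suffices to evaluate each $M_j$ with an error term that is uniform in $j$, and then add up.

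To evaluate $M_j$ I would fix $x_j=1$. Writing $s(X)=\sum_{k=0}^{m-1}kx_k\equiv j+\sum_{k\ne j}kx_k\pmod m$, the requirement $s(X)=j$ becomes $\sum_{k\ne j}kx_k\equiv 0\pmod m$, while the coordinates $(x_k)_{k\ne j}$ are otherwise free. Hence $M_j$ equals the number of subsets $T$ of $D_j:=\mathbb{Z}_m\setminus\{j\}$ with $\sum_{t\in T}t=0$ in $\mathbb{Z}_m$; in the notation of Section~3, $M_j=N(0,D_j)$. (The index $0$ contributes nothing to $s(X)$, and this is consistent with the present count: for $j\ne 0$ we have $0\in D_j$, and adjoining $0$ to a subset does not change its sum, so both choices of $x_0$ are correctly counted.) Next I would estimate $\Phi(D_j)$: for any nontrivial additive character $\chi$, $s_\chi(D_j)=s_\chi(\mathbb{Z}_m)-\chi(j)=-\chi(j)$, so $\Phi(D_j)\le 1$ for every $j$. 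Since $|D_j|=m-1=m-o(m)$ and $\Phi(D_j)$ is bounded by the absolute constant $1$, Corollary~\ref{cor3.4} applies and gives $N(0,D_j)=\frac{2^{m-1}}{m}(1+o(1))$; summing over $j$ then yields
$$wt(f)=\sum_{j=0}^{m-1}N(0,D_j)=m\cdot\frac{2^{m-1}}{m}\bigl(1+o(1)\bigr)=2^{m-1}(1+o(1)),$$
which is the claim, so $f$ is asymptotically balanced.

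The one point that requires care — and the only genuine obstacle — is the uniformity of the $o(1)$ across the $m$ classes: because we add up $m$ separate estimates, we need the error in $N(0,D_j)$ to be $o(2^{m-1}/m)$ with a bound independent of $j$, and a merely pointwise $o(1)$ would be useless. This is not a real difficulty, since the right‑hand side of Theorem~\ref{lem1.1} (and of its corollaries and of Corollary~\ref{cor3.4}) depends only on $m$, its least prime divisor $d$, the subset size $k$, and the bound $\Phi(D_j)\le 1$, and not on the particular set $D_j$ or on the target $0$; but it is the step one should write out explicitly when turning this plan into a full proof.
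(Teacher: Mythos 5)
Your proposal is correct and follows essentially the same route as the paper: partition the inputs by the pointer value $s(X)=j$ with $x_j=1$, identify each class count with the subset-sum quantity $N(0,\mathbb{Z}_m\setminus\{j\})$, and apply Corollary~\ref{cor3.4} before summing over $j$. Your explicit remarks on $\Phi(D_j)\le 1$ and on the uniformity of the $o(1)$ term are details the paper leaves implicit, but they do not change the argument.
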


\begin{proof}
  By applying
Corollary \ref{cor3.4} we have
 \begin{align*}
 wt(f(X))=\sum_{X\in Z_2^m}f(X)&=\sum_{s=0}^{m-1}\sum_{X\in Z_2^m, s(X)=s, x_s=1}1 \\
&=\sum_{s=0}^{m-1} N(0,\mathbb{Z}_m\backslash \{s\})\\
&=\sum_{s=0}^{m-1} \frac 1 m  2^{m-1}(1+o(1))\\
 &={2^{m-1}}(1+o(1)) \qedhere
 \end{align*}
\end{proof}

In \cite{Sh} Shparlinski studied $\sigma_{av}(g(X))$ and raised the
following conjecture:
\begin{conj}
Is it true that for the function given by (1) we have
$$\sigma_{av}(g(X))\geq \left( \frac 12 +o(1)\right)m?$$
\end{conj}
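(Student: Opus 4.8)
The plan is to express the average sensitivity as a sum of per-coordinate flip probabilities and to reduce each probability to a bit-comparison governed by the subset-sum counts over the prime field $\mathbb{Z}_p$. First I would record that, writing $P_i = 2^{-m}\sum_{X}|g(X)-g(X^{(i)})| = 2^{1-m}\#\{X : x_i=0,\ g(X)\ne g(X^{(i)})\}$, one has $\sigma_{av}(g)=\sum_{i=1}^m P_i$, so it suffices to prove $\sum_i P_i \ge (\tfrac12+o(1))m$. The central observation is that $u(X)=\sum_k kx_k \bmod p$ is exactly the sum, modulo $p$, of the subset $\{k : x_k=1\}$ of $\{1,\dots,m\}\subseteq\mathbb{Z}_p$, so every count below is an instance of the subset-sum problem to which the prime-field counting formula of \cite{Li} (the analogue over $\mathbb{Z}_p$ of Corollary \ref{cor3.4}) applies. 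For the relevant sets $D=\{1,\dots,m\}\setminus\{u,u',i\}$ the complement $\mathbb{Z}_p\setminus D$ has size $(p-m)+O(1)$, hence $\Phi(D)=o(p)$, which is what makes the error terms negligible.

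Next I would fix a coordinate $i$ and pair $X$ (with $x_i=0$) against $X^{(i)}$. Setting $u=u(X)$ and $u'=u+i \bmod p$, flipping $x_i$ merely moves the pointer from $u$ to $u'$. A short case analysis on whether $u,u'$ lie in the readable range $[1,m]$ shows: if both lie out of range then $g(X)=x_1=g(X^{(i)})$ for $i\ne 1$, and the flip is never sensitive; in each of the three remaining cases the output flips precisely when two bits of $X$ at distinct positions (the two pointer positions, or a pointer position and the fallback position $1$) disagree, after discarding the degenerate events $u=i$, $u'=i$, $u\in\{1\}$, $u'\in\{1\}$ or $i=1$, whose total contribution over all $X$ and $i$ is $o(m2^m)$ and hence $o(m)$ in $\sigma_{av}$. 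This identifies the sensitive inputs, up to negligible terms, with inputs whose two distinguished bits disagree.

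Then I would count. For a fixed admissible pair of pointer positions and a fixed unequal assignment of the two distinguished bits, the number of completions of the remaining $m-3$ free coordinates with the prescribed weighted sum modulo $p$ is $\frac{2^{m-3}}{p}(1+o(1))$, uniformly, by the subset-sum equidistribution quoted above. Since exactly two of the four bit-assignments are unequal, each of the three nondegenerate cases contributes a factor $\tfrac12+o(1)$, giving $P_i=\tfrac12\bigl(1-Q_i\bigr)+o(1)$, where $Q_i=2^{1-m}\#\{X:x_i=0,\ u,u'\notin[1,m]\}$ is the probability of the doubly-out-of-range event. Averaging over $i$, the count $\sum_i\#\{u\in[1,p]: u,u+i\bmod p\in[m+1,p]\}$ is $O((p-m)^2)$, so the equidistribution gives $\frac1m\sum_i Q_i=O\!\left(\frac{(p-m)^2}{mp}\right)$. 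Because $p$ is the least prime $\ge m$, the prime number theorem gives $p-m=o(m)$, whence $\frac1m\sum_i Q_i=o(1)$ and therefore $\sigma_{av}(g)=\sum_i P_i=(\tfrac12+o(1))m$, which in particular establishes the conjectured lower bound.

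The main obstacle is the uniform subset-sum equidistribution in the counting step: one needs $\frac{2^{m-3}}{p}(1+o(1))$ completions uniformly over all choices of the three fixed positions and of the target residue, which requires the error term in the prime-field counting formula to be $o(2^{m}/p)$ even as $D$ varies over all near-full subsets of $\mathbb{Z}_p$. This uniformity is exactly what the Li-Wan sieve delivers through the bound $\Phi(D)=o(p)$, and it is the technical heart taken from \cite{Li}. A secondary but essential point is the input-independent estimate $p-m=o(m)$, without which the doubly-out-of-range contribution $Q_i$ would fail to vanish and the constant $\tfrac12$ would degrade; it is precisely the requirement that $p$ be the smallest prime exceeding $m$ that makes this available.
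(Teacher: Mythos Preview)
Your proposal is correct and follows essentially the same route as the paper: both reduce the average sensitivity to subset-sum counts and then invoke the equidistribution formula (Corollary~\ref{cor3.4} here, its $\mathbb{Z}_p$-analogue from \cite{Li} in your argument) for near-full subsets. The only substantive addition in your outline is the case analysis for the fallback value $g(X)=x_1$ when $u(X)\notin[1,m]$, together with the bound $\sum_i Q_i=O((p-m)^2/p)=o(m)$ via $p-m=o(m)$; this extra layer is absent from the paper's proof for $f(X)$ precisely because $f$ has no out-of-range case, but it is exactly the adaptation needed for $g$ and is handled correctly.
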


In the same paper Shparlinski gave a lower bound by obtaining a
nontrivial bound on the Fourier coefficients of $g(X)$ via analytical methods. He proved in the same paper
that this value is greater than $\gamma m$, where $\gamma\approx
0.0575$ is a constant.
Li \cite{Li} solved this conjecture.
\begin{thm}Let $\sigma_{av}(g)$ be the average sensitivity of the
previous weighted sum function $g(X)$. Then
$$\sigma_{av}(g(X))=\left( \frac 12 +o(1)\right)m.$$
\end{thm}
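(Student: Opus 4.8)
This theorem is due to \cite{Li}; we sketch the argument --- the same one we shall carry out below for the simplified function $f(X)$, and which relies only on the subset sum counts of Section 3. We describe it for $f(X)$ over $\mathbb{Z}_m$; for $g(X)$ it is identical, over $\mathbb{Z}_p$ with $p$ the least prime $\ge m$, once one observes that the vectors with $u(X)>m$ contribute only $o(m)$ to the average sensitivity (here one uses $p-m=o(m)$, a consequence of known bounds on prime gaps).

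\textbf{Reduction to one coordinate.} By definition $\sigma_{av}(f)=\sum_{i=0}^{m-1}p_i$, where $p_i=2^{-m}\,\#\{X\in\mathbb{Z}_2^m:\ f(X)\ne f(X^{(i)})\}$. The plan is to prove $p_0=o(1)$ and $p_i=\tfrac12+o(1)$ \emph{uniformly} in $1\le i\le m-1$; then $\sigma_{av}(f)=o(1)+(m-1)\bigl(\tfrac12+o(1)\bigr)=\bigl(\tfrac12+o(1)\bigr)m$. For $i=0$, flipping $x_0$ leaves $s(X)$ unchanged, so $f(X)\ne f(X^{(0)})$ precisely when $s(X)=0$; hence $2^m p_0=N(0,\mathbb{Z}_m)=\frac{2^m}{m}(1+o(1))$ by Corollary \ref{cor3.4}, i.e. $p_0=o(1)$.

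\textbf{A coordinate $1\le i\le m-1$.} Flipping $x_i$ shifts the pointer by $\pm i$: with $\varepsilon=x_i$ and $s=s(X)$ we have $s(X^{(i)})=s':=s+(1-2\varepsilon)i$. Fix $\varepsilon\in\{0,1\}$ and $s\in\mathbb{Z}_m$ and count the $X$ with $x_i=\varepsilon$, $s(X)=s$ and $f(X)\ne f(X^{(i)})$. The indices $i,s,s'$ collide only for the at most two ``degenerate'' values $s\in\{i,\,2\varepsilon i\}$, and each such $s$ contributes at most $\#\{X:s(X)=s\}=O(2^m/m)$, hence nothing to the final count. For the remaining $m-2$ values of $s$ the three indices are distinct, so $f(X)=x_s$, $f(X^{(i)})=x_{s'}$, and $f(X)\ne f(X^{(i)})$ means $x_s\ne x_{s'}$. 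Fixing also $(x_s,x_{s'})\in\{(1,0),(0,1)\}$, the requirement $s(X)=s$ becomes a prescribed-sum condition on the remaining $m-3$ coordinates, i.e. a subset sum equation over $D=\mathbb{Z}_m\setminus\{i,s,s'\}$; since $|D|=m-3$ and $\Phi(D)\le 3$ (because $\sum_{a\in\mathbb{Z}_m}\chi(a)=0$ for $\chi\ne\chi_0$), Corollary \ref{cor3.4} gives $N(b,D)=\frac{2^{m-3}}{m}(1+o(1))$ for every target $b$. Thus each non-degenerate $s$ contributes $2\cdot\frac{2^{m-3}}{m}(1+o(1))$, the sum over the $m-2$ of them is $2^{m-2}(1+o(1))$, and adding the cases $\varepsilon=0$ and $\varepsilon=1$ yields $\#\{X:f(X)\ne f(X^{(i)})\}=2^{m-1}(1+o(1))$, so $p_i=\tfrac12+o(1)$.

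\textbf{The main obstacle.} Two points need care. First, the pointer $s(X)$ depends on the very bits $x_s,x_{s'}$ one wants to treat as free; this circularity is exactly what the Li--Wan sieve of Sections 2--3 is for, and Corollary \ref{cor3.4} is what converts ``subsets of a large $D\subseteq\mathbb{Z}_m$ with a prescribed sum'' into the clean main term $\frac{2^{|D|}}{m}$. Second, since the answer is a sum of $m$ quantities $p_i$, the error in each $p_i$ must be \emph{uniform} in $i$; this holds because the error bound of Corollary \ref{cor3.4} depends on $D$ only through $|D|=m-3$ and $\Phi(D)\le 3$, and not on the target sum or on $i$, so a single $o(1)$ serves for all $i$ and the final summation loses nothing.
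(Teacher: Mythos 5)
The paper itself offers no proof of this statement --- it is quoted directly from \cite{Li} --- but your sketch is correct and follows exactly the method the paper uses in Section 4 to prove the analogous theorem for $f(X)$: reduce the sensitivity count to subset-sum counts $N(b,D)$ over sets $D$ obtained by deleting $O(1)$ elements from the ambient cyclic group, and invoke Corollary \ref{cor3.4}. Your explicit handling of the degenerate indices, of the $i=0$ coordinate, and of uniformity in $i$ spells out details that the paper's own computation for $f(X)$ leaves implicit.
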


 Here we prove that this
conjecture still holds for $f(X)$:
\begin{thm}Let $\sigma_{av}(f)$ be the average sensitivity of the
simplified weighted sum function $f(X)$. Then
$$\sigma_{av}(f(X))=\left( \frac 12 +o(1)\right)m.$$
\end{thm}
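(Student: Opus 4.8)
The plan is to compute $\sigma_{av}(f)$ by expanding the defining sum over all inputs $X\in\mathbb{Z}_2^m$ and all coordinate flips $i\in\{0,1,\dots,m-1\}$, and to reduce each contribution to a count of the form $N(b,D)$ handled by Corollary \ref{cor3.4}. For a fixed $X$ and a fixed flip index $i$, one must understand how the quantity $|f(X)-f(X^{(i)})|$ behaves. There are two mechanisms by which flipping $x_i$ can change the output: either $i=s(X)$ (so the pointer itself is the bit being flipped, but note flipping $x_i$ also changes $s$ unless $i=0$), or flipping $x_i$ moves the pointer $s(X)\mapsto s(X^{(i)})=s(X)\pm i \pmod m$ to a different location with a different bit value. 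So the first step is a careful case analysis of $s(X^{(i)})$ versus $s(X)$: writing $s=s(X)$ and $s'=s(X^{(i)})$, we have $s'\equiv s+i(1-2x_i)\pmod m$, and $|f(X)-f(X^{(i)})|=1$ precisely when $x_s\ne x_{s'}'$ where $X'=X^{(i)}$.

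Next I would organize the double sum as $\sigma_{av}(f)=2^{-m}\sum_{i=0}^{m-1}\sum_{X}|f(X)-f(X^{(i)})|$ and, for each $i$, split the inner sum according to the value of $s(X)$ and the value of $s'$. For a generic $i$ (say $i\neq 0$ and $\gcd(i,m)$ small), the events $s(X)=s$, $s(X^{(i)})=s'$ together with prescribed values of $x_s, x_{s'}$ (or $x_s$ alone when $s=s'$, which forces $i(1-2x_i)\equiv 0$) amount to counting subsets of a set $D$ obtained from $\mathbb{Z}_m$ by deleting at most two or three fixed elements, summing to a fixed target. By Corollary \ref{cor3.4} each such count is $\frac{2^{n}}{m}(1+o(1))$ with $n=m-O(1)$, and since $x_s\ne x_{s'}'$ happens for "half" of the relevant bit configurations (the configurations of $x_s$ and $x_{s'}$ split evenly between agreement and disagreement once the pointer locations are distinct and not equal to $i$), the contribution of each such $i$ to $\sum_X |f(X)-f(X^{(i)})|$ is $\tfrac12\cdot 2^{m}(1+o(1))$. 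Summing over the $m-o(m)$ "good" indices $i$ gives $2^{-m}\cdot\tfrac12\cdot 2^m\cdot m(1+o(1)) = \tfrac12 m(1+o(1))$. The boundary and degenerate indices (for instance $i=0$, or $i$ with $\gcd(i,m)$ large so that the pointer moves within a small subgroup, or the coincidence cases $s\in\{0,i\}$ or $s'\in\{0,i\}$) are only $o(m)$ in number and each contributes at most $2^m$ to the sum, hence contribute $o(m)$ overall and can be absorbed into the error term. Matching lower and upper bounds this way yields $\sigma_{av}(f(X))=(\tfrac12+o(1))m$.

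The main obstacle will be the bookkeeping in the case analysis: one must check that when the two pointer positions $s$ and $s'$ are distinct and neither equals $i$, the four bit-patterns $(x_s,x_{s'})$ do indeed contribute a disagreement in exactly two of them \emph{uniformly across all the subset-sum strata}, so that the factor $\tfrac12$ is clean; and one must verify that the exceptional strata (pointer coincidences, $i$ sharing a large common factor with $m$, $i=0$) are genuinely negligible and do not secretly contribute a positive-density term. A secondary subtlety is that deleting two or three elements from $\mathbb{Z}_m$ still leaves $\Phi(D)=O(1)$ and $|D|=m-O(1)$, so Corollary \ref{cor3.4} applies verbatim with the same $\frac{2^n}{m}(1+o(1))$ asymptotics; this is routine but should be stated. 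Once these points are settled the computation is a direct parallel of the treatment of $g(X)$ in \cite{Li}.
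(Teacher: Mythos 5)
Your proposal is correct and follows essentially the same route as the paper: expand the double sum over flip positions $i$ and pointer values $s=s(X)$, reduce each stratum (prescribing $x_i$, $x_s$, $x_{s'}$ with $s'=s\pm i$) to a subset-sum count $N\bigl(b,\mathbb{Z}_m\setminus\{i,s,s'\}\bigr)$ equal to $\frac{2^{m-3}}{m}(1+o(1))$ by Corollary \ref{cor3.4}, so that the generic strata yield the factor $\frac12$ and the $O(m)$ degenerate pairs $(i,s)$ are negligible. The only cosmetic difference is that the paper invokes the $0$--$1$ symmetry to count only the $0\to 1$ flips (hence the $2^{m-1}$ normalization), while you treat both flip directions directly.
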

\begin{proof}
Since we have the symmetry between the bits 1 and 0,
 for simplicity we just need to consider the number of bit changes
from 0 to 1. Thus by Corollary \ref{cor3.4},
 \begin{align*}
2^{m-1}\sigma_{av}(f(X))&=\sum_{X\in
\mathbb{Z}_2^m}\sum_{i=0}^{m-1}\left |
 f(X)-f(X^{(i)})\right |\\
 &=\sum_{s=0}^{m-1}\sum_{X\in \mathbb{Z}_2^m, s(X)=s, x_s=1}\sum_{i=0}^{m-1}\left |
    1-f(X^{(i)})\right |\\
    &\ \ \ +\sum_{s\in D}\sum_{X\in \mathbb{Z}_2^n, s(X)=s, x_s=0}\sum_{i=0}^{m-1}\left |
 0-f(X^{(i)})\right | \\
 &=\sum_{i=0}^{m-1}\sum_{s=0}^{m-1}  \sum_{X\in \mathbb{Z}_2^m, s(X)=s, x_i=0, x_s=1,
 x_{s+i}=0}1\\
 &\ \ \ +\sum_{i=0}^{m-1}\sum_{s=0}^{m}  \sum_{X\in \mathbb{Z}_2^m, s(X)=s, x_i=0, x_s=1,
 x_{s+i}=1}1\\
 &=\sum_{i=0}^{m-1}\sum_{s=0}^{m-1}N(0,\mathbb{Z}_m\backslash \{i, s+i, s \})+
 \sum_{i=0}^{m-1}\sum_{s=0}^{m-1}N(0,\mathbb{Z}_m\backslash \{i, s-i, s \})\\
 &=\sum_{i=0}^{m-1}\sum_{s=0}^{m-1} \frac {2^{m-2}} m(1+o(1)) \\
 &= m{2^{m-2}} (1+o(1)).
\end{align*}
Finally we have $$\sigma_{av}(f(X))=\left(\frac 12
+o(1)\right)m.\qedhere$$
\end{proof}


\section{Further questions}

\begin{table}
\center
\caption{The relationship between the variable number $m$ and the maximal Fourier coefficients of $f(X)$ over $0<m<22$. Decimals are rounded to three decimal places.}
\begin{tabular} {|c|c|c|}

\hline
\multirow{2}*{$m$}&\multirow{2}*{$\max|\widehat{f}(a)|$}&\multirow{2}*{$m^{-1}log_{2}\max|\widehat{f}(a)|$}\\
&&\\ \hline
1&      1&      0\\ \hline
2&      0.5&        -0.5\\ \hline
3&      0.5&        -0.333\\ \hline
4&      0.75&       -0.104\\ \hline
5&      0.75&       -0.083\\ \hline
6&      0.438&      -0.199\\ \hline
7&      0.469&      -0.156\\ \hline
8&      0.281&      -0.229\\ \hline
9&      0.305&      -0.191\\ \hline
10&     0.227&      -0.214\\ \hline
11&     0.146&      -0.252\\ \hline
12&     0.209&      -0.188\\ \hline
13&     0.093&      -0.264\\ \hline
14&     0.086&      -0.253\\ \hline
15&     0.159&      -0.177\\ \hline
16&     0.067&      -0.244\\ \hline
17&     0.059&      -0.240\\ \hline
18&     0.119&      -0.171\\ \hline
19&     0.053&      -0.224\\ \hline
20&     0.050&      -0.216\\ \hline
21&     0.089&      -0.166\\ \hline

\end{tabular}
\label{table1}
\end{table}

In \cite{Sh} Shparlinski studied the Fourier coefficients of the weighted sum function.
\begin{defn}Let $h(X)$ be a Boolean function from $\{ 0, 1\}^m$ to
$\{0,1\}$. The Fourier coefficient of $h(X)$ at $a$ is defined by
$$\widehat{h}(a)=\frac 1 {2^m}\sum _{X\in \{0, 1\}^m}(-1)^{h(X)+a\cdot X}.$$
\end{defn}

 Shparlinski raised the following conjecture which is stronger than his conjecture on the average sensitivity.
\begin{conj}[Shparlinski,  \cite{Sh}]
For the previous weighted sum function $g(X)$, we have
$$\max|\widehat{g}(a)|=2^{(-\frac 12 +o(1))m}.$$
\end{conj}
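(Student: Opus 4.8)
The plan is to reduce Shparlinski's conjecture to a one-sided estimate and then to attack that estimate with the character-sum/subset-sum machinery of Sections 2--3. Only the upper bound has content: since $g$ is Boolean, the Plancherel identity gives $\sum_{a}\widehat g(a)^2 = 2^{-m}\sum_{X}\big((-1)^{g(X)}\big)^2 = 1$, and there are $2^m$ frequencies, so $\max_a\widehat g(a)^2\ge 2^{-m}$ and hence $\max_a|\widehat g(a)|\ge 2^{-m/2}$ holds automatically. It therefore suffices to prove the matching bound $\max_a|\widehat g(a)|\le 2^{(-1/2+o(1))m}$, i.e. $\big|\sum_X(-1)^{g(X)+a\cdot X}\big|\le 2^{(1/2+o(1))m}$ uniformly in $a$.

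I would expand this exponential sum using the structure of $g$. Partition the cube by the value $u=u(X)$ of the weighted sum, write $g(X)=x_{u'}$ on the class $u(X)=u$, where $u'=u$ for $1\le u\le m$ and $u'=1$ otherwise, and detect the constraint with the additive characters of $\mathbb Z_p$. Writing $e_p(z)=e^{2\pi i z/p}$, this gives $\sum_X(-1)^{g(X)+a\cdot X}=\frac1p\sum_{u=1}^{p}\sum_{t=0}^{p-1}e_p(-tu)\prod_{k=1}^{m}\big(1+(-1)^{a_k+\delta_{k,u'}}e_p(tk)\big)$, the product factorizing over coordinates because the constraint has been moved into the character. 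Each local factor satisfies $|1+(-1)^{\epsilon}e_p(tk)|=2|\cos(\cdot)|$, and here the prime modulus is a real asset: every $t\ne0$ has $\gcd(t,p)=1$, so the residues $tk\bmod p$ equidistribute and the classical identity $\int_0^1\log|2\cos\pi\theta|\,d\theta=0$ forces $\prod_k|1+(-1)^{\epsilon}e_p(tk)|=2^{o(m)}$. Summing over the $u$ and over all $t\ne0$ then contributes only $2^{o(m)}$ in total --- comfortably inside the target, and controlled exactly as in the bound for $N(k,b,D)$ in Theorem \ref{lem1.1}.

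The step I expect to be the real obstacle is the principal ($t=0$) term $\frac1p\sum_{u}\prod_{k}\big(1+(-1)^{a_k+\delta_{k,u'}}\big)$. A factor vanishes unless every exponent is even, i.e. unless $a=e_{u'}$, so this term is nonzero precisely at the structured frequencies $a=e_s$; for such $a$ (with $1\le s\le m$) the sole surviving $u$ is the one with $u'=s$, and it yields the single main term $2^m/p=(1+o(1))2^m/m$. This is not a lossy estimate: it matches the honest subset-sum count $\#\{X:u(X)=s\}=(1+o(1))2^m/p$ furnished by the Corollary \ref{cor3.4}-type equidistribution, on which class $g(X)=x_s=e_s\cdot X$ makes every summand equal to $+1$. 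Hence $|\widehat g(e_s)|=(1+o(1))/p=m^{-1+o(1)}$, which dwarfs $2^{-m/2}$, and I do not see how any refinement can suppress a genuine main term of this size. The numerical data in Table \ref{table1} for the simplified function points the same way, with $m^{-1}\log_2\max_a|\widehat f(a)|$ tracking $-(\log_2 m)/m\to 0$ rather than $-\tfrac12$. The realistic conclusion of this plan is therefore $\max_a|\widehat g(a)|=m^{-1+o(1)}=2^{o(m)}$, driven by the frequencies $e_s$, which refines (and as stated disproves) the conjecture; a $2^{-m/2}$-type bound could survive only after excluding this structured family of frequencies from the maximum.
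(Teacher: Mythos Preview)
This statement is a \emph{conjecture}, and the paper does not prove it; on the contrary, the paper reports that the numerics in Table~\ref{table1} ``indicate that Shparlinski's conjecture may not be true'' and offers Conjecture~\ref{conj5.3} as a replacement. There is thus no proof in the paper to compare your attempt against. Your write-up is not a proof of the conjecture either: it is, explicitly, an argument that the conjecture fails at the weight-one frequencies $a=e_s$, and in that respect it is in the same spirit as the paper's own skepticism.

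Where your heuristic is genuinely incomplete is the step ``the residues $tk\bmod p$ equidistribute and the classical identity $\int_0^1\log|2\cos\pi\theta|\,d\theta=0$ forces $\prod_k|1+(-1)^{\epsilon}e_p(tk)|=2^{o(m)}$.'' For a single fixed $t\ne 0$ the points $tk\bmod p$ with $1\le k\le m$ form an arithmetic progression of length $m<p$ in $\mathbb Z_p$, not an equidistributed sequence, so the integral identity does not apply as written; bounding precisely this product uniformly in $t$ and in $a$ is the heart of Shparlinski's analysis, and for general $a$ only the exponent $1-\rho$ is known. (For the special frequencies $a=e_s$ one can argue differently, via the exact identity $\prod_{j=1}^{p-1}\bigl|1+e_p(j)\bigr|=1$ together with $p-m=o(m)$, so that only $o(m)$ factors separate the partial product from $1$; but this uses that $a\oplus e_{u'}$ has Hamming weight $O(1)$ and does not transfer to generic $a$, hence cannot be promoted to an upper bound for \emph{all} Fourier coefficients.) Finally, observe that your asserted asymptotic $|\widehat g(e_s)|\sim 1/p$ would, for large $m$, exceed the upper bound $\max_a|\widehat g(a)|\le 2^{(-\rho+o(1))m}$ that the paper attributes to Shparlinski, since $1/p\gg 2^{-\rho m}$; whichever side of that tension is at fault, it has to be resolved before your argument can be accepted as a rigorous disproof rather than a heuristic one.
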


Shparlinski proved in \cite{Sh} that
$$\max|\widehat{g}(a)|\leq 2^{(-\rho+o(1))m},$$
where $$\rho=\frac {4}{\pi\ln{2}}\mathcal{L}(\frac{\pi}4), \ \ \mathcal{L}(x)=-\int_{0}^{x}\ln\cos{\theta}d\theta.$$

Note that $\rho\approx 0.1587$.  Currently this is still the best result.

We compute the value of $\max|\widehat{f}(a)|$ over $0<m<22$ using a computer
program. The results are shown in Table \ref{table1}.
Note again that when $m$
is prime then  $f(X)=g(X)$.  The experimental results indicate
that Shparlinski's conjecture may not be true. It will be amazing
to obtain the true bounds on both $\max|\widehat{g}(a)|$ and $\max|\widehat{f}(a)|$.

Instead of the Shparlinski's conjecture, we propose a new conjecture:
  \begin{conj}\label{conj5.3}
For the newly defined weighted sum function $f(X)$, we have
$$\max|\widehat{f}(a)|=2^{(-\rho +o(1))m}.$$
\end{conj}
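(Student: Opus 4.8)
The plan is to recast $\widehat{f}(a)$ as an exponential sum over the full character group $\widehat{\mathbb{Z}}_m$, and then to run an equidistribution analysis in two directions, an upper and a matching lower bound. Writing $\omega=e^{2\pi i/m}$ and detecting the constraint $s(X)=j$ by the orthogonality relation $\frac1m\sum_{t=0}^{m-1}\omega^{t(s(X)-j)}$, I would factor the resulting sum over coordinates. Since $(-1)^{a\cdot X}\omega^{t s(X)}=\prod_k\big((-1)^{a_kx_k}\omega^{tkx_k}\big)$ and $f(X)=x_{s(X)}$ only flips the sign on the single coordinate $k=j$, each $x_k$ with $k\ne j$ contributes $1+(-1)^{a_k}\omega^{tk}$ while $x_j$ contributes $1-(-1)^{a_j}\omega^{tj}$, giving
\begin{align*}
2^m\widehat{f}(a)=\frac1m\sum_{j=0}^{m-1}\sum_{t=0}^{m-1}\omega^{-tj}\big(1-(-1)^{a_j}\omega^{tj}\big)\prod_{k\ne j}\big(1+(-1)^{a_k}\omega^{tk}\big).
\end{align*}
Set $\alpha_k(t)=|1+(-1)^{a_k}\omega^{tk}|$, which equals $2|\cos(\pi tk/m)|$ when $a_k=0$ and $2|\sin(\pi tk/m)|$ when $a_k=1$, and let $\beta_k(t)=|1-(-1)^{a_k}\omega^{tk}|$ be the complementary quantity, so that $\alpha_k^2+\beta_k^2=4$.

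For the upper bound I would, for a fixed $a$, estimate $|2^m\widehat{f}(a)|\le\frac1m\sum_t\sum_j\beta_j(t)\prod_{k\ne j}\alpha_k(t)$ and note that $\prod_k\alpha_k(t)\le\prod_k 2\max\big(|\cos(\pi tk/m)|,|\sin(\pi tk/m)|\big)$ uniformly in $a$. When $\gcd(t,m)=1$ the residues $tk\bmod m$ equidistribute over $\mathbb{Z}_m$, so $\frac1m\log_2\prod_k 2\max(\cdots)$ tends to $\frac1\pi\int_0^\pi\log_2\big(2\max(|\cos\theta|,|\sin\theta|)\big)\,d\theta$. Using the $\tfrac\pi2$-periodicity and the reflection $\theta\mapsto\tfrac\pi2-\theta$, this integral reduces to $\frac{4}{\pi\ln2}\int_0^{\pi/4}\ln(2\cos\theta)\,d\theta=\frac{4}{\pi\ln2}\big(\tfrac\pi4\ln2-\mathcal{L}(\tfrac\pi4)\big)=1-\rho$, which is precisely where the constant $\rho$ of the statement enters. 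Hence these generic $t$ contribute at most $2^{(1-\rho+o(1))m}$, i.e. at most $2^{(-\rho+o(1))m}$ after the factor $2^{-m}$, and this part is essentially Shparlinski's argument carried over to $\mathbb{Z}_m$.

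The terms with $g=\gcd(t,m)>1$ are the genuinely new feature of a composite modulus, absent in the prime case where every $t\ne0$ is a unit. There $tk\bmod m$ runs over the subgroup of index $g$, each value taken $g$ times, and for small index $m/g$ the bare product $\prod_k\alpha_k(t)$ can reach as high as $2^m$ (for instance $t=m/2$ with $a_k\equiv k\pmod2$), which naively overshoots the target. The saving must come from the correction factor $1-(-1)^{a_j}\omega^{tj}$: exactly when $\prod_k\alpha_k(t)$ is anomalously large, $\alpha_j(t)$ is forced close to $2$, hence $\beta_j(t)$ is small, and in fact the whole $t$-slice telescopes to something negligible (the slice $t=m/2$ collapses to $0$). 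Converting this correlation into a uniform bound over all divisor classes — proving that the exact summand $\sum_j\omega^{-tj}\beta_j(t)\prod_{k\ne j}\alpha_k(t)$ never exceeds $2^{(1-\rho+o(1))m}$ in modulus — is the first real obstacle, and I expect it to need a careful Riemann-sum estimate on each subgroup together with the near-vanishing just described.

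To match from below I would fix one unit $t_0$ coprime to $m$ and choose each $a_k$ to select the larger of $|\cos(\pi t_0k/m)|,|\sin(\pi t_0k/m)|$, so that $\prod_k\alpha_k(t_0)=2^{(1-\rho+o(1))m}$ is as large as possible; the hope is that the slice $t=t_0$ produces a main term of size $2^{(-\rho+o(1))m}$. The difficulty, which I expect to be the true crux and the reason the statement is still only a conjecture, is that $2^m\widehat{f}(a)$ is a signed sum of $\sim m^2$ such slices, so a lower bound demands ruling out massive cancellation — that is, an honest lower bound for an exponential sum, which the available technology does not readily supply. Parseval gives only $\max_a|\widehat{f}(a)|\ge 2^{-m/2}$, far below $2^{-\rho m}$, so the max is a sharp outlier and some concentration or moment argument would be needed to pin the constant to exactly $\rho$ rather than something weaker. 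The numerical table reinforces the delicacy: for prime $m$ the exponent sits visibly below $-\rho$ in magnitude and drifts only slowly, so even the correct value of the limiting constant, and thus the truth of the conjecture itself, remains genuinely uncertain.
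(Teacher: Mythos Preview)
The statement you were asked to prove is Conjecture~5.3, and the paper does \emph{not} prove it: it is explicitly posed as an open problem in the ``Further questions'' section, supported only by the numerical data in Table~1. So there is no paper proof to compare your proposal against.

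Your write-up is not a proof either, and to your credit you say so. The upper-bound half of your sketch is essentially a transplant of Shparlinski's argument from $\mathbb{Z}_p$ to $\mathbb{Z}_m$, and you correctly flag the new difficulty coming from non-unit $t$; your heuristic that the correction factor $\beta_j(t)$ should damp the anomalous subgroup contributions is plausible but not established. The lower-bound half is where you rightly locate the real obstruction: a signed sum of $\sim m^2$ slices could in principle cancel, and nothing in your outline prevents that. Since the paper itself leaves the conjecture open (and even questions whether the constant is really $\rho$, given that the table suggests Shparlinski's original conjecture for $g(X)$ may be false), your honest assessment that ``the available technology does not readily supply'' a matching lower bound is the correct conclusion. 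In short: there is no gap to name relative to the paper, because the paper makes no claim of proof; your proposal is a reasonable reconnaissance of the problem rather than a proof, and should be presented as such.
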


\end{document}